\def\R{\mathbb R}
\def\Q{\mathbb Q}
\def\S{$\Sigma_g$\ }
\def\SS{\Sigma_g\ }
\def\SS1{\Sigma_1\ }
\newenvironment{proof-sketch}{\\ \indent{\it Proof sketch.}\hspace*{1em}}{\hbox{}\hfill$\Box$\bigskip}
\newtheorem{defin}{Definition}[section]
\newtheorem{alg}[defin]{Algorithm}
\newtheorem{thm}[defin]{Theorem}
\newtheorem{lem}[defin]{Lemma}
\newtheorem{prop}[defin]{Proposition}
\newtheorem{remark}[defin]{remark}
\title{Local Routing in Graphs Embedded on Surfaces of Arbitrary Genus}
\author{Maia Fraser\thanks{Department of Computer Science, University of Chicago, 1100 East 58th Street,
Chicago, IL 60637 ({\tt maia@cs.uchicago.edu}).}}
\date{November 11, 2010}
\begin{document}

\maketitle

\begin{center} 
{\bf Keywords: }sensor networks, local routing, topological graph, homology basis, Face Routing
\end{center}

\pagestyle{myheadings}
\thispagestyle{plain}
\markboth{Maia Fraser}
{Local Routing on Surfaces of Arbitrary Genus}

\begin{abstract}
We present a local routing algorithm which guarantees delivery in all connected graphs embedded on a known surface of genus $g$. The algorithm transports $O(g\log n)$ memory and finishes in time $O(g^2n^2)$, where $n$ is the size of the graph. It requires access to a homology basis for the surface. This algorithm, GFR, may be viewed as a suitable generalization of Face Routing (FR), the well-known algorithm for plane graphs, which we previously showed does {\it not} guarantee delivery in graphs embedded on positive genus surfaces. The problem for such surfaces is the potential presence of homologically non-trivial closed walks which may be traversed by the right-hand rule. We use an interesting mathematical property of homology bases (proven in Lemma~\ref{lem:connectFaceBdr}) to show that such walks will not impede GFR. FR is at the base of most routing algorithms used in modern (2D) ad hoc networks: these algorithms all involve additional local techniques to deal with edge-crossings so FR may be applied. GFR should be viewed in the same light, as a base algorithm which could for example be tailored to sensor networks on surfaces in 3D. Currently there are no known efficient local, logarithmic memory algorithms for 3D ad hoc networks. From a theoretical point of view our work suggests that the efficiency advantages from which FR benefits are related to the codimension~one nature of an embedded graph in a surface rather than the flatness of that surface (planarity).


\end{abstract}

\section{Introduction}\label{intro}

Face routing (FR) was introduced by Kranakis, Urrutia et al in 1999 in \cite{kranurr}. It guarantees delivery in time linear in the size of the network, using logarithmic memory, and was the first algorithm to exploit availability of position, now a common feature in many ad hoc networks. Its impact was twofold: it showed that it was possible to find {\it local} algorithms for ad hoc networks, as opposed to constantly re-adjusting routing tables, and it revealed the important role {\it position} could play in achieving this. The word ``local" in this context means that the algorithm is executed by a mobile agent traveling through the network and using only local information at each node (including position) together with a ``small" amount of transported memory (commonly restricted to be logarithmic in the size of the network). 

More recently, in 2004, work of Reingold \cite{stconn} showed universal exploration sequences (UXS's) are constructible in logspace thus implying that it is always theoretically possible to route locally in the above sense, even without position data (transporting memory of order 
$O(\log n)$ as overhead in the messages)\footnote{The author thanks Mark Braverman for his comments on this. See also \cite{braverm}.}. However delivery time by this method is a high polynomial in the network size (not yet optimized), making it impractical for most ad hoc applications. This makes the critical challenge for routing in 3D ad hoc networks  {\it time efficiency} and it opens the question of whether position on a surface can help improve this efficiency in 3D networks as it did in the 2D setting. We answer this question in the affirmative.

Related recent work shows that, even using geometric information, no constant 
memory local routing algorithm exists for 3D UBW's that occupy a slab of thickness greater than $\frac1 {\sqrt 2}$ (see Durocher et al \cite{lata}), however that work does {\it not} address algorithms using logarithmic memory. 

Another related result shows that no {\it sublinear}
memory local routing algorithm exists for directed graphs, even if they are planar (Fraser et al \cite{fraJoint}). This means no local routing algorithm exists at all in ad hoc networks in the presence of uni-directional communication links, under the common (logarithmic-memory) definition of ``local" used in the ad hoc community.

In this paper we make a first step towards addressing local, logarithmic-memory routing algorithms for undirected 3D ad hoc networks, by considering those which lie on {\it surfaces}. In particular we are interested in ad hoc networks (for example of sensors) which may be deployed on terrains and 
outer surfaces of physical structures such as vehicles, buildings or other objects. We assume such surfaces are in fact oriented two-dimensional smooth manifolds which we will approximate by piecewise linear surfaces (PLS's). We discuss this in more detail in the next section. Given a PLS (and a homology basis for it) there exists a local, logarithmic-memory, quadratic-time routing algorithm, called {\it generalized Face Routing} (GFR), which is correct for {\it all} connected graphs that are ``reasonably" embedded on the surface. 

By ``reasonable" we mean such that any two edges and/or reference curves (connecting curve, homology curves) may intersect each other only a bounded number of times (independent of graph size). We remark that Face Routing in the plane assumes edges of the graph to be straight lines and assumes a linear or piecewise linear connecting curve between sender and destination, thus achieving this intersection property. Similarly in our case we take piecewise linear (PL) curves to guarantee reasonableness.  


Like Face Routing, GFR relies on the absence of edge-crossings. It is therefore, like Face Routing, a base algorithm which would require further processing to handle edge-crossings and make it suitable for real-world ad hoc networks (see \cite{kranurr2, kuhn}). With appropriate assumptions on the network and surface it is possible to apply any of the standard methods for handling edge-crossings in the plane \cite{kranurr2, kuhn, lata} to surfaces, however the practicality of the assumptions involved would have to be studied in a more applied paper. We welcome input from readers with specific applications in mind.

\section{Terminology and Assumptions}

Assume \S is a piecewise linear surface (PLS) of genus $g$ that has been triangulated. This is a natural assumption in practice (see Section~\ref{sec:model} below). 
We recall the {\bf genus} of a surface is defined as the maximal 
number of disjoint, simple closed curves which may be removed from the surface without disconnecting it. 

The algorithm GFR will be executed by an agent traveling in an embedded graph 
$G$ on $\Sigma_g.$ We assume, as is typical in ad hoc networks, that the time of transmission is small compared with the speed at which the network changes
so that one may reasonably assume the network is fixed but unknown during execution. The agent does not have access to a global representation of the graph, which may be constantly changing, but only to the ID's of its current and neighboring vertices as well their positions on $\Sigma_g.$ In addition it has access to a {\it homology basis} $\mathcal B$ 
for $\Sigma_g.$ This will consist of closed simple curves contained in the $1$-dimensional simplicial subcomplex of \S.

\subsection{Homology and homology bases} \label{sec:hom} The reader who is not familiar with the concept of homology or homology basis is referred to Jeff Erickson's \cite{erickson} for an introduction to these concepts from a computational geometry perspective or to Hatcher \cite{hatcher} for a more detailed mathematical treatment. 
We give only a minimal overview here, of {\bf simplicial homology} for surfaces. Assume a ring $R$. In this paper it will be $\Q$. For $k =0, 1, 2$, 
let $C_k(\Sigma_g;R)$ be the set of all formal linear combinations of $k$-simplices in \S with coefficients in $R$. $C_k(\Sigma_g;R)$ is naturally a group under addition; its elements are called {\bf $k$-chains}. Define the {\bf boundary operator} 
$$\delta_k : C_k(\Sigma_g;R) \to C_{k-1}(\Sigma_g;R)$$ 
to be the linear map which sends each oriented simplex to the sum of
of its oriented boundary simplices (or to zero if there are none). 
Those $k$-chains which map to zero (i.e. have no boundary) are called {\bf k-cycles}. 
They form a group which we denote
$Z_k = \ker \delta_k$ for $k = 1, 2$. For our case of $R = \Q$, connected $1$-cycles with integer coefficients can be interpreted as closed unparametrized curves which are contained in the $1$-dimensional simplicial sub-complex of \S. Unconnected $1$-cycles with integer coefficients are formal sums of connected ones. Let $B_k = \delta_{k+1}(C_{k+1})$ denote the $k$-boundaries, i.e. images of the boundary operator $\delta_{k+1}$ for $k = 0, 1$.
Define the first homology group $H_1(\Sigma_g;R)$ to be the quotient group 
$Z_1/B_1$. Its elements are called {\bf homology classes}. They are the equivalence classes for the equivalence relation ``homologous to": two $1$-cycles $\beta$ and $\beta'$ are said to be {\bf homologous} if their difference is a $1$-boundary, i.e. $\beta$ and $-\beta'$ are the two boundary components of some formal sum of $2$-simplices. When $R = \Q$ (or any other field), $H_1(\Sigma_g;R)$ is in fact a vector space.  A {\bf homology basis} is technically a basis of this vector space and so consists of homology classes. But, in fact, one can always find a set of $1$-cycles which are connected and have integer coefficients whose homology classes form a homology basis. As a result, it is common in the computational geometry literature to speak of the set of these {\it closed curves}  (instead of their homology classes) as the {homology basis}. We will adopt this usage. We remark that $H_1(\Sigma_g; \Q) = \Q^{2g}$ so a homology basis for \S will consist of $2g$ closed curves.

\subsection{Closed curves} As mentioned above, we may interpret $1$-cycles with integer coefficients as closed (oriented) unparametrized curves or linear combinations thereof. Moreover {\it all} closed curves used in the homological arguments of this paper in Section~\ref{sec:proof} will be oriented and simplicial, i.e. are actually connected $1$-cycles with integer coefficients. Therefore, in Section~\ref{sec:proof} we will write {\bf closed curve} as a synonym for ``connected $1$-cycle with integer coefficients", and will write {\bf simple closed curve} if the curve has no repeated vertices\footnote{The latter would unfortunately be a ``cycle" in standard graph theoretic terminology; our convention avoids this.}. 
Finally, when a parametrization is assumed, we will specify this and assume the orientation is compatible. See also Remark~\ref{rem:walk}.

\subsection{Planar Representation} \label{sec:planar}
A set of disjoint simple closed curves $\mu_1, \ldots, \mu_g$ on the surface $\Sigma_g$ which do not disconnect \S may always be completed to a homology basis 
$$\mathcal B = \{ \mu_1, \ldots, \mu_g, \lambda_1, \ldots, \lambda_g \}.$$ 
The curves $\mu_1, \ldots, \mu_g$ also
give a convenient way of representing the surface in the plane (see Fig.~\ref{fig:handles})\footnote{For illustrations in the present paper we will additionally assume that $\mu_i \cap \lambda_j$ is a single point if $i=j$ and empty otherwise (such a basis can always be found), but this is not necessary to the algorithm; any homology basis will do. }. Indeed, let $\mathcal C$ consist of the union of these curves. After removing $\mathcal C$ we know the resulting surface is homeomorphic to a multiply punctured plane,
namely some $\mathcal U = \R^2 \setminus \bigcup\limits_{i=1}^g D_i$, where the $D_i$ are disks. Once we assume a particular
homeomorphism $\Phi$ from $\Sigma_g \setminus \mathcal C$ onto $\mathcal U \subset \R^2$,
we can map all points and curves of interest in $\Sigma_g$ (that are not completely contained in the special $\mathcal C$) to corresponding points and curves in the plane. We call this a {\bf planar representation} of the surface. 

Moreover, an embedded graph $G$ on \S gives rise to $\Phi(G)$, an embedded object (points and arcs) on $\mathcal U \subset \R^2$, which we can interpret as a graph after making the identifications given by $\mathcal C$: edges will be unions of curve segments with breaks occurring at elements of $\mathcal C$. As the agent for GFR travels in $G$, all coordinates of neighbors it obtains will be given in terms of planar coordinates in $\mathcal U$. See Section~\ref{sec:data} for more details. 

We remark however that the proof of correctness that we give for GFR in Section~\ref{sec:proof} is not on the plane but rather on the PLS surface \S, as this view is notationally simpler for the homological arguments involved.

\begin{figure}[h] 
\centerline{\epsfig{file=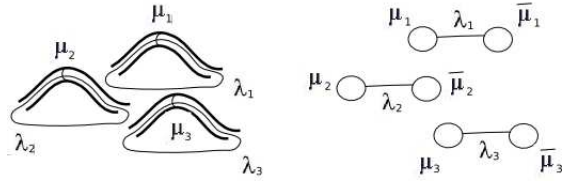,height=2.5cm}}
\medskip
\caption{To the left: surface $\Sigma_g$ of genus $g = 3$ viewed as the plane with $3$ handles attached (after removal of one point). To the right: \S viewed as the plane with $6$ holes pairwise connected by the $\lambda_i$.
The set 
$\{\mu_i, \lambda_i, : 1 \leq i \leq 3\}$ is a homology basis for $\Sigma_g$.} \label{fig:handles}
\end{figure}

\subsection{Note on Practical 3D Model and Future Work} \label{sec:model}
The property that \S be a triangulated PLS arises naturally from the standard modeling schemes for representing real surfaces. We give two examples.

\paragraph{Triangular Mesh Surface Models} This is one of the most common modeling schemes for representing an idealized surface\footnote{(a triangular polytope)} in computer graphics. Such a presentation consists primarily of a set of triangles (given by their vertex coordinates in $\R^3$) and may also organize this information (e.g. in so-called ``triangle strips") to facilitate stepping systematically from each triangle to a neighbor. Triangular Mesh Surfaces are in fact triangulated PLS's. Suppose we have fixed a three dimensional ambient coordinate system in which the real world object of interest (e.g. the surface of a building) is situated. Efficient algorithms exist for generating a PLS which approximates a smooth surface using samples from the surface (see for example \cite{margaliot}). One then triangulates the linear pieces if they are not already triangular so as to obtain a triangular mesh model. 
To take this a step further, suppose one has
two triangular mesh surface models, $\hat\Sigma_{\mbox{\tiny IN}}$ and $\hat\Sigma_{\mbox{\tiny OUT}}$, one on either side of the actual surface  $\Sigma_g$ and suitably close together. It is relatively straightforward to 
construct a piecewise linear map $\Phi$ from 
either $\hat\Sigma_{\mbox{\tiny IN}}$ or $\hat\Sigma_{\mbox{\tiny OUT}}$ to the plane. If these two surfaces are built ``compatibly" then one can uniquely map points lying between the two surfaces to points on the plane. This allows one to convert three dimensional coordinates of points on the real world \S to coordinates of points on the plane. Also, straight lines connecting nearby points on the three dimensional surface \S would be mapped to PL curves in the plane and one could then use standard planar techniques to deal with edge-crossings. The aspects of ``compatibility" one would require of the triangle mesh models would depend on which edge-crossing techniques one wishes to use and also the particular setting. We welcome comments from people with particular sensor network settings in mind.

\paragraph{Wire Frame Models} Note that in structural analysis, wire frame models are often the primary objects used to model curved surfaces and an appropriate triangular mesh representation is then generated to fill out the wire frame model (see \cite{klein, yamada}). For simplicity, the union of homology curves $\mathcal C_{\mbox{\tiny IN}}$ (resp. $\mathcal C_{\mbox{\tiny OUT}}$) could be taken to be contained in the original wire frame model.

\subsection{Data required by GFR} \label{sec:data}

We now summarize GFR's access to data. Each node has stored the model \S (for example in form of a triangular mesh model) and the instructions for computing $\Phi$ (see Section~\ref{sec:planar}). The model \S may be very approximate and the corresponding data small, depending on the application. Its main purpose is to record homological information\footnote{There exist various manifold reconstruction techniques from topological data analysis in recent years which yield good approximate manifolds with coinciding homological invariants.}. It is moreover constant: it depends {\it only} on the surface and not on the number of nodes or their positions. Included in this data is the homology basis $\mathcal B$. For example, assuming a triangular mesh model, this can be done by recording for every $1$-simplex whether or not it belongs to the $i$'th curve of $\mathcal B$ (for $i \in [2g]$) and its position in a sequential numbering of the $1$-simplices of that curve.

Each node can (via GPS) obtain its momentary latitude, longitude and altitude. From these it computes (using $\Phi$) its {\bf planar position} on $\mathcal U \subset \R^2$. The agent executing GFR can, at each node, query the neighboring nodes for their planar positions. 

When the agent is started by a sender $S$ wishing to transmit to a destination $T$, it computes a {\bf connecting curve} $\gamma$. This is any PL curve in the plane connecting $\Phi(S)$ to $\Phi(T)$. $\gamma$ may be mostly composed of $1$-simplices, to simplify calculations. 

GFR therefore operates essentially {\it in the plane} (within $\mathcal U \subset \R^2$) but uses identifications of certain curves in the plane and other homological data to accomplish its task.

\section{Generalized Face Routing} \label{gfr}

\subsection{Basic Definitions and Face Routing} \label{basicfr}

\begin{defin} [Border Walk]
A walk $\beta$ in $G$ which has no edges on its left side will be called a \emph{border walk}. 
\end{defin}

Note that border walks are boundary components of {\bf regions}, where a region is a connected component of the complement of $G$ in $\Sigma_g.$ In general, regions in \S may have multiple boundary components. We assume the standard boundary orientation convention: a boundary curve $\beta$ is oriented so that the region it bounds lies to the left when $\beta$ is traversed in its preferred sense. Clearly each {\it directed} edge forms part of the boundary of exactly one region. We thus define:

\begin{defin} [Adjacent Border Walk]
The (at most) two border walks which share a given (undirected) edge are said to be \emph{adjacent} at that edge.
\end{defin}

\begin{defin} [Trivial or Non Trivial Border Walks (NTBW), Tiled Region]
If a border walk $\beta$ is the only boundary component of some region, we say $\beta$ is a \emph{trivial border walk}, otherwise it is a \emph{non-trivial border walk} (abbreviated NTBW). In other words, a closed curve is an NTBW if and only if it is not homologous to zero. We define the \emph{tiled region}\footnote{We remark that we are departing from the definition of $\mathcal R(G)$ in earlier papers by the author by here including in $\mathcal R(G)$ any region with a connected boundary, regardless of its genus.} $\mathcal R(G)$ to be the union of $G$ together with all regions of the graph which have only one oriented boundary component.
\end{defin}

\begin{remark} \label{rem:walk}
The graph $G$ does not in general form part of the $1$-dimensional subcomplex of $\Sigma_g$. Its vertices represent mobile nodes which may lie anywhere in $\Sigma_g$ and do not necessarily coincide with vertices of the fixed triangulation we have assumed for $\Sigma_g.$ $G$ should for the time being be viewed as an embedded graph on \S which gives rise to the object $\Phi(G)$ in $\mathcal U \subset \R^2$. We do not specify how $\Phi$ will do this. In Section~\ref{sec:model} we mentioned one way in which $\Phi$ could be defined (on a neighborhood of \S in $\R^3$) so straight lines in $\R^3$ between nodes would be mapped by $\Phi$ to PL-curves in the plane. But many choices are possible depending on how one wishes to handle edge-crossings. Therefore we describe GFR in terms of the non-simplicial graph $G$ embedded on \S.

In fact, for the proofs of Section~\ref{sec:proof}, we {\it will} actually subdivide the triangulation of \S so that all curves considered are simplicial. But this is \emph{not done by the algorithm}; it is only a convenience for our proofs. 

Regarding the (non-simplicial) definition of boundary orientation just given, we remark that it assumes the region in question has the same orientation as $\Sigma_g.$ When we speak of $1$-cycles bounding $2$-chains, the latter may be formal sums with coefficients of any sign and $2$-simplices themselves may have orientations that agree or disagree with that of \S.
\end{remark}

In keeping with standard usage, we assume the {\bf right-hand rule} assigns as exit edge at each node
 the next edge clockwise from the incoming edge. Such a routing rule then traverses a border walk containing the starting (directed) edge.

For each reference or connecting curve we assume an initial and final endpoint and orient the curve from the former to the latter. For $\gamma$, these are $S$ and $T$ respectively, for a (closed) reference curve they both correspond to a single point on \S. Moreover, we assume a simple parametrization of each curve: to each point on the curve we associate its {\boldmath{$t$}}{\bf-value}, defined as the distance from the initial point to that point {\it in the direction of orientation}
divided by the total length of the curve. 
This induces a {\bf total ordering} of the crossings of any such curve with the edges of the graph (defining higher crossings as those with higher $t$-values). 

We now define a slower variant of Face Routing. We assume $\alpha$ to be a connecting curve or homology curve and $\beta_0$ to be a border walk that meets $\alpha$ (to start at $S$ with $\alpha = \gamma$, we let $\beta_0$ be the border walk determined by the first exit edge clockwise from $\gamma$ at $S$). In general, we start at a particular crossing $\alpha(t_0)$ of $\alpha$ with $\beta_0$ and {\it for the purposes of the algorithm only}, define {\bf next greater} crossings in terms of $t$-values that have been shifted by subtracting $t_0~{\rm mod}~1$; note: we do not re-define actual $t$-values, just use shifted values to order crossings so as to make $\alpha(t_0)$ the least.

\begin{alg}  [Modified Slow Face Routing (MSFR)]
{During the entire process, stop if $T$ is encountered.} 
\begin{enumerate}
\item $\beta \leftarrow \beta_0$. 
\item Traverse $\beta$ once to find $\alpha(t_{\rm next})$, the \emph{next greater} crossing with $\alpha$ after the entry crossing $\alpha(t_0)$, keeping a record over all of $\beta$ of the sum of crossings to the left minus crossings to the right for each homology curve. 
\item If any of the total sums is non-zero, stop and exit the algorithm, 
otherwise travel to $\alpha(t_{\rm next})$. 
\item Set $\beta$ to the adjacent border walk. $t_0 \leftarrow t_{\rm next}$. 
 Go to (2).
 \end{enumerate}
\end{alg}

\begin{lem}\label{lem:sfr} Let $\alpha$ be a connecting curve or $\alpha \in \mathcal B$.
MSFR fully follows each component of $\alpha\cap {\mathcal R(G)}$
 that it travels unless it meets $T$; more precisely, if $\alpha\cap {\mathcal R(G)}$ starts at $\alpha(t_{0})$ on the NTBW $\beta_1$  and ends at $\alpha(t_{*})$ on the NTBW $\beta_2$, then MSFR along $\alpha$ starting at $\alpha(t_0)$ on $\beta_1$ stops at $\beta_2$ or $T$. And MSFR is reversible (performing MSFR on the oppositely oriented curve starting at $\alpha(t_*)$ on $\beta_2$ will surely reach $\beta_1$). MSFR along all components of $\alpha \cap \mathcal R(G)$ takes a total time of $O(gn^2)$ and memory $O(g\log n)$, $n = |G|$. 
\end{lem}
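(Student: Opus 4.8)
The plan is to combine the standard Face Routing invariant with two homological facts about border walks --- one covering the trivial border walks MSFR meets inside $\mathcal R(G)$, the other covering the non-trivial border walk on which it must halt --- and then to add a short counting argument for the stated time and memory bounds.

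First I would pin down the Face Routing invariant for MSFR while it is inside $\mathcal R(G)$: the successive border walks $\beta$ assigned in steps~(1) and~(4) are exactly the oriented boundaries of the successive regions of $\mathcal R(G)$ through which $\alpha$ passes, and each execution of step~(2) advances MSFR to the next crossing of $\alpha$ with $G$ in increasing (shifted) $t$-order. The argument is the classical one: if MSFR enters a region $F$ of $\mathcal R(G)$ at a crossing $\alpha(t_0)$ of $\alpha$ with $\partial F$, then $\alpha$ stays in $\mathrm{int}(F)$ until it next leaves $F$, so the first crossing of $\alpha$ with $\partial F$ of shifted $t$-value exceeding $t_0$ is exactly that exit point $\alpha(t_{\mathrm{next}})$; the edge of $G$ carrying $\alpha(t_{\mathrm{next}})$ is shared by $F$ and the next region $F'$ along $\alpha$, and step~(4) replaces $\partial F$ by $\partial F'$. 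Transversality of $\alpha$ with the $1$-skeleton of $G$ --- which the \emph{reasonableness} hypothesis supplies, after a harmless perturbation if necessary --- makes ``crossing'' unambiguous and the exit point well-defined; the initial step at $S$ is the same statement once one notes that the prescribed $\beta_0$ (first exit edge clockwise from $\gamma$) bounds the first region $\gamma$ enters, and in the ``continuing'' case $\beta_0$ is understood to be the trivial border walk on the $\mathcal R(G)$-side of $\alpha(t_0)$, adjacent there to $\beta_1$. I would argue this directly on $\Sigma_g$, following Section~\ref{sec:proof}.

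Next I would analyse the test in step~(3). The quantity recorded for a homology curve $\mu\in\mathcal B$ over a border walk $\beta$ is, up to a fixed sign, the algebraic intersection number $\beta\cdot\mu$, which depends only on $[\beta]$ and $[\mu]$. If $\beta$ is trivial it is the oriented boundary of a region, so $[\beta]=0$, so every recorded sum is $0$ and MSFR continues; hence MSFR cannot halt at any interior step of a component of $\alpha\cap\mathcal R(G)$, since by definition every region met there has connected boundary. On the other hand, the first crossing at which MSFR would step onto a border walk bounding a region of disconnected boundary is exactly the point $\alpha(t_*)$ at which $\alpha$ leaves $\mathcal R(G)$, and that border walk is the NTBW $\beta_2$, so $[\beta_2]\neq 0$. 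Here is the one genuinely topological point: because $\mathcal B$ is a homology basis and the intersection pairing on $H_1(\Sigma_g;\Q)$ is non-degenerate, $[\beta_2]\neq 0$ forces $[\beta_2]\cdot[b]\neq 0$ for at least one $b\in\mathcal B$; hence some recorded sum is non-zero and MSFR halts while traversing $\beta_2$. (If $T$ lies on the sub-arc being followed, the global stopping rule halts MSFR at $T$ first.) This gives the ``fully follows'' assertion. Reversibility is then almost free: reversing $\alpha$ reverses the sequence of regions through which it passes but changes neither the border walks involved nor --- since the recorded sums depend only on the oriented border walks and on $\mathcal B$, not on $\alpha$ --- the outcome of any step-(3) test, so MSFR run on the reversed curve from $\alpha(t_*)$ on $\beta_2$ retraces the same border walks in reverse order and halts on $\beta_1$. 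For the resource bounds: by Euler's formula the total length of all border walks of $G$ is $2|E(G)|=O(n)$, so each individual border walk has length $O(n)$; by reasonableness $\alpha$ meets each edge of $G$ a bounded number of times, so $\alpha$ has $O(n)$ crossings with $G$ in all, and since MSFR's shifted $t$-value is strictly increasing it executes step~(2) only $O(n)$ times over all components. Each execution traverses one border walk of length $O(n)$, spending $O(1)$ time per edge on crossings with $\alpha$ and $O(g)$ time per edge updating the $2g$ running sums, i.e.\ $O(gn)$ time, for $O(gn^2)$ in total; and MSFR stores only its current vertex, $O(1)$ many $t$-values and a first-edge marker for the current border walk (each $O(\log n)$ bits) together with the $2g$ running sums, each bounded in absolute value by the $O(n)$ total number of crossings and hence $O(\log n)$ bits, for $O(g\log n)$ memory.

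The step I expect to require the most care is the non-triviality half of the step-(3) analysis: one must argue not merely that $[\beta_2]\neq 0$ but that this is necessarily witnessed by one of the $2g$ curves actually carried by the algorithm --- which is where non-degeneracy of the intersection form together with the spanning property of $\mathcal B$ enters --- and one must verify that what the algorithm tallies along $\beta_2$ really is the signed intersection number and not a raw geometric crossing count. Everything else is either classical Face Routing bookkeeping (routine, but needing a clean treatment of general position and of the boundary-orientation conventions of Section~\ref{sec:proof}) or elementary counting.
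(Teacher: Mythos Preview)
Your proposal is correct and in fact supplies more detail than the paper itself does here. The paper does not prove the correctness and reversibility claims of this lemma in-line: it simply refers to \cite{fra1} for those and restricts its explicit argument to the time and space bounds. Your complexity argument matches the paper's almost verbatim (count $O(n)$ crossings of $\alpha$ with $G$ by reasonableness, each triggering one $O(n)$-edge traversal of a border walk, with $O(g)$ work per edge to update the $2g$ running sums; $2g$ sums of magnitude $O(n)$ give the $O(g\log n)$ space).

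For the correctness half you spell out what the paper leaves implicit: that the recorded sums over a border walk $\beta$ are precisely the intersection numbers $\#(\beta,\mu)$ for $\mu\in\mathcal B$, so they all vanish when $\beta$ is trivial (since $[\beta]=0$) and at least one is nonzero when $\beta$ is an NTBW (since $[\beta]\neq 0$ and the intersection form is non-degenerate on $H_1$). This is exactly the mechanism the paper relies on --- non-degeneracy is stated as property~3 in Section~\ref{sec:proof} and invoked for related purposes in Lemma~\ref{lem:connectFaceBdr} --- so your argument is the intended one, just made explicit rather than deferred to the earlier paper. One cosmetic remark: the bound $2|E(G)|=O(n)$ on total border-walk length follows directly from the observation (already in Section~\ref{basicfr}) that each directed edge lies on exactly one border walk; Euler's formula is only needed if you want to bound $|E|$ itself in terms of $|V|$ and $g$.
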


Although Face Routing finishes in linear time transporting logarithmic memory, it does not have the first properties mentioned (following $\alpha$ within $\mathcal R(G)$ and reversibility). This is addressed in \cite{fra1}. The time bound follows from the bound $d$ on intersections of ``reasonable" curves (since MSFR will travel the boundary of each region in $\mathcal R(G)$ as many times as that walk's edges meet $\alpha$, the walk may have at most $O(n)$ edges and each edge may meet $\alpha$ at most $d$ times); at the same time, for each edge $e$ traversed, MSFR must check all $2g$ curves of $\mathcal B$ for crossings with $e$. The factor of $g$ in the space bound is likewise due to the total sums kept for all these curves (each total can reach be recorded with $O(\log n)$ bits). 

\subsection{Generalized Face Routing (GFR)} \label{subgfr}

Let $\mathcal L$ consist of $\gamma$ together with all elements of $\mathcal B$ and all their oppositely oriented counterparts. 
Assume a fixed {\bf indexing} of these $4g + 1$ elements {\it beginning with $\gamma$}. 

For $\gamma$ and each curve $\alpha \in \mathcal B,$ we have already defined the {\boldmath{$t$}}{\bf-value} of a point on the curve; for the oppositely oriented counterparts of $\alpha \in \mathcal B,$ we define the $t$-value analogously, reversing the roles of $S$ and $T$.

GFR transports a 
list of triples $(i, t_i, t'_i)$, where $i$ is the index of a curve $\alpha$ in $\mathcal L$, and $t_i, t'_i$ are  $t$-values along $\alpha$. 

If MSFR along some $\alpha \in \mathcal L$ stops before reaching $T$, then by Lemma \ref{lem:sfr} it must do so at a NTBW. If at this stage MSFR is begun on another $\alpha' \in \mathcal L$, it will stop at another (possibly same) NTBW. We thus hop from one NTBW to another.

Let $\Gamma = (\mathcal N, \mathcal E)$ be a {\bf virtual multi-graph} whose vertices ($\mathcal N$) are all the NTBW's of the given embedded graph $G$, and whose edges ($\mathcal E$) are connected {\it curve-like}\footnote{If $\alpha$ has {\it no self-intersections} then a connected {\it curve-like} component is just a connected component. If $\alpha$ does have self-intersections, we assume it is parametrized so $\alpha$ is a homeomorphic image of the unit circle. Then retain only that part $\mathcal P$ of the circle mapped into $\mathcal R(G)$. Connected curve-like components of $\alpha \cap \mathcal R(G)$ are images of the connected components of $\mathcal P$.\label{foot:curvelike}} components $c$ of $\alpha \cap \mathcal R(G) : \alpha \in \mathcal B \cup \{\gamma\}$ where $c\subset \alpha$ is an edge between $\beta_1$ and $\beta_2$, for $\beta_1, \beta_2 \in \mathcal N$ if and only if $c$ has its endpoints at these NTBW's. An immediate consequence of this definition is:

\begin{lem} Given an edge $c = (\beta_1, \beta_2)$ of $\Gamma$, which corresponds to a component of $\alpha \in \mathcal B \cup \{\gamma\}$, if MSFR along $\pm\alpha$ is started on $\beta_1$ at the corresponding endpoint of $c$ then it will stop on $\beta_2$. Hence, given any path $\beta_1, \beta_2, \ldots, \beta_k$ in $\Gamma$, such that $T \notin \delta$ for any border walk $\delta$ meeting one of the $c_i = (\beta_i, \beta_{i+1}): i = 1, \ldots, k-1$, reverse MSFR can be used iteratively to return to $\beta_1$ from $\beta_k$ as long as there is available a list of the $t$-values corresponding to the initial and final endpoints of the $c_i$, together with the index of the curve in $\mathcal L$ to which each $c_i$ belongs.
\end{lem}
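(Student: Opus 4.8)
The plan is to read off both assertions almost directly from Lemma~\ref{lem:sfr}, so the real work is matching up definitions rather than proving anything new. First I would unwind what an edge $c = (\beta_1,\beta_2)$ of $\Gamma$ is: by construction it is a connected curve-like component of $\alpha \cap \mathcal R(G)$ for some $\alpha \in \mathcal B \cup \{\gamma\}$, and its two endpoints lie on the NTBWs $\beta_1,\beta_2$. Choosing the sign of $\alpha$ (i.e. $+\alpha$ versus $-\alpha$) so that $c$ runs from $\alpha(t_0)$ on $\beta_1$ to $\alpha(t_*)$ on $\beta_2$, Lemma~\ref{lem:sfr} says that MSFR along that $\alpha$ begun at $\alpha(t_0)$ on $\beta_1$ stops at $\beta_2$ (or at $T$, if $T$ happens to lie on a border walk traversed en route). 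For the opposite sign I would invoke the reversibility clause of the same lemma: MSFR along $-\alpha$, which is again one of the $4g+1$ elements of $\mathcal L$, begun at $\alpha(t_*)$ on $\beta_2$ surely reaches $\beta_1$. Since the two traversals follow the same component $c$ in opposite directions, each visits exactly the border walks meeting $c$ — a fact I would record explicitly, as it is what makes the $T$-hypothesis of the second assertion precisely what is needed.

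For the ``hence'', I would iterate the first assertion backwards along the path. Given $\beta_1,\ldots,\beta_k$ with $c_i = (\beta_i,\beta_{i+1})$ corresponding to some $\alpha_i$, and given the recorded list of the $\mathcal L$-indices of the $\alpha_i$ together with the $t$-values of the initial and final endpoints of each $c_i$, I would, starting at $\beta_k$ and running $i$ from $k-1$ down to $1$, perform reverse MSFR along $-\alpha_i$ (whose $\mathcal L$-index is determined by that of $\alpha_i$) begun at the crossing of $\beta_{i+1}$ corresponding to the recorded final-endpoint $t$-value of $c_i$. By the first assertion this traversal stops at $\beta_i$ unless it meets $T$; by hypothesis $T$ lies on no border walk meeting any $c_i$, and by the remark above the traversal only visits border walks meeting $c_i$, so it cannot terminate early at $T$. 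After the step $i=1$ the agent is on $\beta_1$, as claimed. The recorded $t$-values are exactly what is needed to know where on $\beta_{i+1}$ to begin each reverse step, and the stored index tells the agent which element of $\mathcal L$ to run; the time and memory overhead is that of iterating MSFR (Lemma~\ref{lem:sfr}) $k-1$ times plus storing the list.

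The hard part, such as it is, is purely bookkeeping and has two pieces. First, ``the corresponding endpoint of $c$'' must be identified consistently under each orientation: the parametrizations of $\alpha$ and $-\alpha$ differ (by $t \mapsto 1-t$, up to the per-run shift by $t_0 \bmod 1$ that MSFR applies to order crossings), so I would phrase the correspondence carefully enough that reverse MSFR along $-\alpha_i$ really is started at the $t_*^{(i)}$ endpoint of $c_i$ and not elsewhere. Second, when $\alpha$ has self-intersections one must use the notion of curve-like component from footnote~\ref{foot:curvelike} and check that Lemma~\ref{lem:sfr}, stated for components of $\alpha \cap \mathcal R(G)$, still delivers the claim for a single curve-like component — i.e. that chasing adjacent border walks and next-greater $t$-values never lets MSFR slip from one curve-like component of $\alpha$ onto another through a self-crossing; this is precisely what the circle-parametrization convention of that footnote is arranged to prevent, so I would point to it rather than belabor it.
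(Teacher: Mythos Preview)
Your proposal is correct and matches the paper's approach: the paper states this lemma as ``an immediate consequence'' of the definition of $\Gamma$ (together with Lemma~\ref{lem:sfr}) and gives no further proof, so your unpacking of how it follows from Lemma~\ref{lem:sfr} is exactly the intended argument, just made explicit. Your attention to the bookkeeping issues (orientation/$t$-value correspondence and curve-like components under self-intersection) goes beyond what the paper spells out but is consistent with it.
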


Using this lemma, we define:
  
\begin{alg}  [Generalized Face Routing (GFR)] \label{alg:gfr}
{During the entire process, exit if $T$ is encountered.}
\begin{enumerate}
\item Use MSFR on $\gamma$ starting from $S$ until it stops. The agent is now at a NTBW, i.e. vertex of $\Gamma$. 
\item Perform a depth first search of $\Gamma$, keeping -- at each subsequent NTBW $\beta$ where MSFR stops -- a record of $(i, t_i, t'_i)$, where $i$ is the index of the curve $\alpha_i \in \mathcal L$ just followed, $t_i$ is the $t$-value along $\alpha$ where MSFR was initiated and $t'_i$ is the $t$-value along $\alpha$ where MSFR reached $\beta$. 
\end{enumerate}
\end{alg}

\begin{remark}
That an NTBW has been previously visited is detected by GFR by the presence of a crossing $\alpha_i(t_i)$ such that $(i, t_i, t'_i)$ and $\alpha_i$ correspond, as described, to one of the recorded triples.
\end{remark}

In the next Section, we will prove:

\begin{prop}\label{lem:2gNTBW} There exists a path in $\Gamma$ from $first(\gamma)$ to $last(\gamma)$, resp. the first and last NTBW's met by $\gamma$. Moreover, $|\Gamma| \leq 2g$. 
\end{prop}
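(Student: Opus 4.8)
\smallskip\noindent\textit{Proof plan.}\quad I would prove the two assertions separately, the first resting essentially on Lemma~\ref{lem:connectFaceBdr}. For the path, follow $\gamma$ from $S$ to $T$ and consider how its (transversal) crossings with $\partial\mathcal{R}(G)$, the union of the NTBW's, cut it into consecutive arcs $A_0,A_1,\dots,A_m$, where $A_0$ runs from $S$ to $first(\gamma)$ and $A_m$ from $last(\gamma)$ to $T$. Since $S,T\in G\subset\mathcal{R}(G)$ and every crossing of $\partial\mathcal{R}(G)$ switches between $\mathcal{R}(G)$ and a region not in $\mathcal{R}(G)$ --- a ``complex'' region, i.e.\ one with at least two boundary components, all of them NTBW's --- the arcs alternate sides. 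In particular $m$ is even; the even-indexed interior arcs $A_2,A_4,\dots,A_{m-2}$ lie in $\mathcal{R}(G)$ and are, by the definition of $\Gamma$, edges joining the NTBW's at their two ends; and the odd-indexed arcs $A_1,A_3,\dots,A_{m-1}$ each lie inside a single complex region. So it remains only to join, inside $\Gamma$, the two ends of each odd arc.

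The two ends of an odd arc $A_{2i-1}$ are two boundary components of one and the same complex region $R$, and this is exactly the configuration controlled by Lemma~\ref{lem:connectFaceBdr}: the ``mathematical property of homology bases'' it establishes should say that within $\mathcal{R}(G)$ all boundary components of any one region are tied together by chains of curve-like pieces of curves of $\mathcal{B}$ --- equivalently, that the boundary components of $R$ all lie in one connected component of the subgraph of $\Gamma$ formed by the $\mathcal{B}$-edges. (Morally this holds because the intersection pairing on $H_1(\Sigma_g;\Q)$ is non-degenerate: each non-null-homologous boundary component of $R$ is crossed by some $\mu\in\mathcal{B}$, that $\mu$ is then forced to re-cross $\partial R$, and the parts of $\mu$ lying outside $R$ --- hence in $\mathcal{R}(G)$ --- pair up boundary components of $R$; ranging over $\mathcal{B}$ links them all.) Granting the lemma we obtain, for each $i$, a path in $\Gamma$ between the two ends of $A_{2i-1}$, and concatenating these with the edges $A_2,A_4,\dots,A_{m-2}$ gives a path in $\Gamma$ from $first(\gamma)$ to $last(\gamma)$. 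The degenerate cases are trivial: if $\gamma$ meets no NTBW there is nothing to prove (and MSFR on $\gamma$ reaches $T$ by Lemma~\ref{lem:sfr}), and if it meets exactly one the path has length zero.

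For $|\Gamma|\le 2g$, let $N=|\mathcal{N}|$ and cut $\Sigma_g$ along all the NTBW's --- working, as in Section~\ref{sec:proof}, on a subdivision of the triangulation fine enough that this becomes a cut along disjoint simplicial $1$-cycles. The resulting pieces are $\mathcal{R}(G)$, which is connected because $G$ is, and the $C$ complex regions, each carrying at least two boundary circles. As a circle has Euler characteristic zero, $2-2g=\chi(\mathcal{R}(G))+\sum_j\chi(R_j)$; writing the Euler characteristic of each piece as $2-2h-b$ with $h$ its genus and $b$ its number of boundary circles, and noting that these boundary circles are glued together in pairs to recover the $N$ NTBW's, this rearranges to an identity of the shape $N=g+C-H$ with $H\ge 0$ the total genus of the pieces. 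Feeding in the combinatorial fact that each complex region supplies at least two boundary-circle slots, together with the connectedness of $\mathcal{R}(G)$, then forces the right-hand side down to $N\le 2g$.

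The genuinely hard step is Lemma~\ref{lem:connectFaceBdr} itself: with it in hand both halves of the proposition drop out of the structural picture above, whereas without it the odd arcs of $\gamma$ could disconnect the NTBW's inside $\Gamma$ and no such path need exist. A secondary difficulty, on the counting side, is the bookkeeping required to make ``cut along the NTBW's'' precise when border walks are non-simple or share edges, and to treat NTBW's that have a complex region on each side; the simplicial subdivision of Section~\ref{sec:proof} is presumably what makes this manageable.
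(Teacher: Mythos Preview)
Your argument for the first assertion is essentially identical to the paper's: follow $\gamma$, list the NTBW's it meets, observe that consecutive ones are alternately joined by a component of $\gamma\cap\mathcal R(G)$ (an edge of $\Gamma$) or are boundary components of a common region (handled by Lemma~\ref{lem:connectFaceBdr}), and concatenate.

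For the bound $|\Gamma|\le 2g$ you take a genuinely different route. The paper argues directly from the definition of genus: with $k$ complex regions, it cuts along all but one NTBW per region ($N-k$ cuts, surface stays connected, so $N-k\le g$) and separately along one NTBW per region ($k$ cuts, still connected, so $k\le g$), giving $N\le 2g$. Your Euler-characteristic computation is a valid alternative and yields the same conclusion once you finish it: from $N=g+C-H$ with $H\ge0$ you get $N\le g+C$, and since the $N$ boundary circles on the complex-region side satisfy $\sum_j b_j=N$ with each $b_j\ge2$ you get $C\le N/2$, whence $N\le 2g$. You state the ingredients but stop just short of this two-line closure; it would be worth writing it out. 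The paper's argument is shorter and avoids Euler characteristics entirely; yours has the advantage of making the role of the genera of the pieces explicit (and shows the bound is tight only when $H=0$ and every complex region has exactly two boundary components).

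One caveat: the ``secondary difficulty'' you flag---border walks that are non-simple or share edges, and the possibility that $\mathcal R(G)$ is not a bona fide surface-with-boundary because some edge of $G$ has complex regions on both sides---is real, and the simplicial subdivision of Section~\ref{sec:proof} does \emph{not} by itself resolve it (it only makes curves simplicial, not simple or disjoint). The paper's argument is equally exposed to this issue, so you are not worse off, but your remark that the subdivision ``presumably'' handles it is optimistic.
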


This implies GFR must eventually reach $last(\gamma)$, from which MSFR will be performed on $\gamma$ until $T$ is reached, thus proving the following Theorem. Fig.~\ref{fig:example} illustrates the process.

\begin{figure}[h] 
\centerline{\epsfig{file=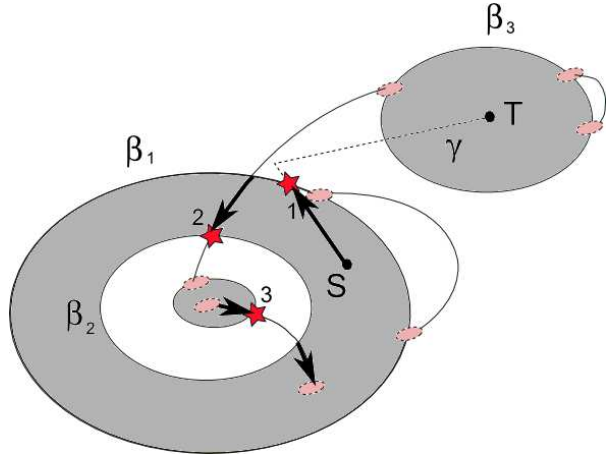,height=6cm}}
\medskip
\caption{This is a planar representation of a surface $\Sigma_g$ (with $g= 4$ in this example). The agent executing GFR sees local portions only of this picture. The shaded area is (the image under $\Phi$ of) $\mathcal R(G)$ - we will omit writing $\Phi$ and keep all notation from $\Sigma$. The curves $\beta_1, \beta_2, \beta_3$ are three of the NTBW's (two more are unlabeled). The dotted piecewise linear curve is $\gamma$; i.e., the connecting curve between $S$ and $T$. The pink disks are the closed disks bounded by $\mu_i$ and $-\bar\mu_i$ respectively; the thin solid curves connecting pink discs in pairs are the $\lambda_i$ ($\mu_i$ paired with $-\bar\mu_i$). GFR on these data follows the arrows in boldface, recording a triple at each star in the numbered sequence. Note that $\beta_3$ appears in two pieces in the plane, but is in reality a single curve (when pink disk pairs are identified). Also, in going from the second star to the third, GFR will first take the arrow into the annular shaded area and then re-emerge in the central disk-like shaded area before arriving at $\beta_3$. From $\beta_3$ the curve $\gamma$ is used to reach $T$.} \label{fig:example}
\end{figure}

\begin{thm} GFR is a local routing algorithm transporting memory of $O(g\log n)$, and guaranteeing delivery in time $O(g^2n^2)$.
\end{thm}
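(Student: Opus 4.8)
The plan is to obtain the Theorem by combining Proposition~\ref{lem:2gNTBW} with Lemma~\ref{lem:sfr}; the substantive topological content has been isolated in the Proposition, so what is left is a correctness check for the depth-first search together with a count of the size of $\Gamma$.

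First I would argue correctness. Step~(1) of Algorithm~\ref{alg:gfr} runs MSFR along $\gamma$ from $S$. The first curve-like component of $\gamma\cap\mathcal R(G)$ begins at $S$, and since every region in the complement of $\mathcal R(G)$ is bounded solely by NTBW's, this component ends either at $T$ (and we are done) or at $first(\gamma)$; by Lemma~\ref{lem:sfr} MSFR follows it to that endpoint. If $\gamma$ crosses no NTBW at all the single component runs from $S$ to $T$ and delivery is immediate; otherwise the agent now stands at the vertex $first(\gamma)$ of $\Gamma$. Step~(2) performs a depth-first search of $\Gamma$ from $first(\gamma)$: a forward move along an edge $c$ of $\Gamma$ is MSFR along the corresponding $\pm\alpha\in\mathcal L$, which lands on the correct NTBW by the construction of the edges of $\Gamma$, and a backtrack is the reverse MSFR, legitimate by the reversibility asserted in Lemma~\ref{lem:sfr} applied to the recorded triples $(i,t_i,t'_i)$. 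Such a search visits every vertex in the connected component of $first(\gamma)$, which by Proposition~\ref{lem:2gNTBW} contains $last(\gamma)$; hence the agent eventually stands on $last(\gamma)$. From there MSFR along $\gamma$ follows the final component of $\gamma\cap\mathcal R(G)$, which — because $\gamma$ crosses no NTBW after $last(\gamma)$ and so never re-enters the complement of $\mathcal R(G)$ — ends at $T$; Lemma~\ref{lem:sfr} then guarantees that MSFR reaches $T$, so the agent delivers.

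Next I would establish the resource bounds. For the running time, note first that under the reasonableness assumption each of the $2g+1$ curves in $\mathcal B\cup\{\gamma\}$ meets $G$ in $O(n)$ points, hence $\alpha\cap\mathcal R(G)$ has $O(n)$ curve-like components and $|\mathcal E|=O(gn)$. The depth-first search traverses each edge of $\Gamma$ only $O(1)$ times, so over the whole search MSFR is performed on all components of each curve $\alpha$ a constant number of times, which costs $O(gn^2)$ per curve by Lemma~\ref{lem:sfr} and therefore $O(g^2n^2)$ in total; the two bracketing MSFR passes along $\gamma$ in Step~(1) and in the final phase add only a further $O(gn^2)$. For the memory, MSFR itself uses $O(g\log n)$ at any instant by Lemma~\ref{lem:sfr}; the depth-first recursion stack holds one triple $(i,t_i,t'_i)$ per level and has depth at most the number of vertices of $\Gamma$, which is at most $2g$ by Proposition~\ref{lem:2gNTBW}, and each triple costs $O(\log g)$ bits for the index $i\in[4g+1]$ together with $O(\log n)$ bits for each $t$-value, recorded as a crossing index among the $O(n)$ crossings of its curve with $G$; an additional set of at most $2g$ such crossing-indices suffices to flag already-visited NTBW's, so the transported memory is $O(g\log n)$. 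Locality is then clear: at each node GFR uses only the ID's and planar positions of the current and adjacent vertices, the fixed stored model of $\Sigma_g$ with its homology basis $\mathcal B$ (whose size does not depend on $n$), and this $O(g\log n)$ transported record.

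I expect the only real friction to be organisational: checking that an agent physically walking in $G$ can faithfully simulate a depth-first search of the virtual graph $\Gamma$ — this is precisely where reversibility of MSFR and the recorded triples are used, to re-find the exact re-entry crossings on backtracking — and confirming that recognising revisited NTBW's and carrying the recursion stack both fit inside the stated memory budget. The one genuinely deep ingredient, namely that $\Gamma$ is connected enough to join $first(\gamma)$ to $last(\gamma)$ and has at most $2g$ vertices, is exactly Proposition~\ref{lem:2gNTBW} (resting in turn on Lemma~\ref{lem:connectFaceBdr}), which is taken as given here.
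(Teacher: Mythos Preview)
Your proposal is correct and follows essentially the same approach as the paper: correctness comes from Proposition~\ref{lem:2gNTBW} (the DFS reaches $last(\gamma)$, whence MSFR on $\gamma$ reaches $T$), the time bound from the fact that there are $O(g)$ curves in $\mathcal L$ and MSFR over all components of any one of them costs $O(gn^2)$ by Lemma~\ref{lem:sfr}, and the space bound from the $O(g)$ recorded triples of size $O(\log n)$ plus MSFR's own $O(g\log n)$. Your write-up is in fact more detailed than the paper's one-paragraph justification, but the structure and the cited ingredients are identical.
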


The time bound follows since $|\Gamma| \leq 2g$ and also there are $O(g)$ reference curves in $\mathcal L$ while all edges of $\Gamma$ which correspond to any one of these curves can be traversed by MSFR in time $O(gn^2)$ by Proposition~\ref{lem:2gNTBW}. The record of triples kept by GFR has size $O(g\log n)$ and traversing each edge of $\Gamma$ also requires memory $O(g\log n)$ by the Lemma, hence the space bound.

\section{Proof of Proposition} \label{sec:proof}
We now prove Proposition~\ref{lem:2gNTBW}. It relies on a mathematical fact, Lemma~\ref{lem:connectFaceBdr}, which we prove separately below. From now on we assume that the original triangulation of \S has been subdivided  so that all closed curves referred to -- including those which are NTBW's -- are {\it simplicial}. We do this to prove Proposition only. The algorithm {\it does not do this in practice}.

\begin{proof} We divide the proof into two parts: showing $first(\gamma)$ and $last(\gamma)$ are in the same connected component of $\Gamma$, and establishing the bound on the order of $\Gamma$. 

(1) As we follow $\gamma$ from $first(\gamma)$ to $last(\gamma)$, let $first(\gamma)= \beta_1, \beta_2, \ldots, \beta_{n-1}, \beta_n = last(\gamma)$ be the NTBW's encountered. Successive NTBW's $\beta_i, \beta_{i+1}$ are alternatingly either: \begin{itemize}
\item[-] connected to each other by a component of $\gamma \cap \mathcal R(G)$ or are 
\item[-] boundary components of a common region $F$.
\end{itemize} In the first case, there is by definition an edge between $\beta_i$ and $\beta_{i+1}$ in $\Gamma$. In the second case, by Lemma~\ref{lem:connectFaceBdr}, there is a path in $\Gamma$ between them. Thus we obtain a path
in $\Gamma$ from $first(\gamma)$ to $last(\gamma)$.

(2) To count the NTBW's, suppose there are $N$ of them and let $\mathcal F$ be the collection of all ``non-trivial" regions, meaning regions which have more than one boundary component. Let $k$ be the number of elements in $\mathcal F$. We may cut along all but one NTBW per element of $\mathcal F$ and obtain a surface $\Sigma$ which is still connected, since each region that is an element of $\mathcal F$ will still be attached to $\mathcal R(G)$, which itself is connected. We are cutting along $N - k$ curves,
so $N - k \leq g$. On the other hand, cutting along just one NTBW of each element of $\mathcal F$ would not disconnect the surface either, so $k \leq g$. Hence, $N \leq g + k \leq 2g$.
\end{proof}

\begin{remark} Recall that part (1) of the proof of Proposition~\ref{lem:2gNTBW} was divided into two cases. In the first, we considered vertices of $\Gamma$ (i.e. NTBW's of $G$) which were adjacent in $\Gamma$ since the NTBW's involved were connected by pieces of $\gamma \cap \mathcal R(G)$.
In the second case we appealed to
Lemma~\ref{lem:connectFaceBdr} to establish a path in $\Gamma$ between two NTBW's which share a common region $F$. In fact such a path exists without using any pieces of $\gamma \cap \mathcal R(G)$. In other words if $\Gamma^\ast$ is the subgraph of $\Gamma$ obtained by removing all edges of the form $\gamma \cap \mathcal R(G)$ then the path of interest exists even in $\Gamma^\ast$. We state and prove the Lemma in this stronger form.
\end{remark}

\begin{remark}In the Lemma we will also allow $\mathcal B$ to contain formal sums of simple closed curves, since the added generality simplifies the proof by induction.
\end{remark}

\begin{lem} \label{lem:connectFaceBdr}\footnote{This may be stated more generally, without reference to a graph: given $\mathcal R$, a connected subsurface with boundary of a Riemann surface $\Sigma_g$ with fixed choice of homology basis $\mathcal B$, and two closed curves $\beta, \beta'$ which are boundary components of a single component of $\mathcal R^c$, there exists a sequence of homologically non-trivial curves $\beta = \beta_1, \beta_2, \ldots, \beta_{n-1}, \beta_n = \beta'$ such that each pair of successive curves $\beta_i$ and $\beta_{i+1}$ is connected by a curve segment in the family $\{\alpha \cap \mathcal R~:~\alpha \in \mathcal B\}$.}
Given a set $\mathcal B$ of $1$-cycles with integer coefficients generating $H_1(\Sigma_g; \Q)$
and a connected graph $G$ embedded in $\Sigma_g,$ define $\Gamma^\ast$ to be a virtual multi-graph, with the same vertices as $\Gamma$ in Section~\ref{subgfr} but with edges being connected curve-like\footref{foot:curvelike} components of $\alpha \cap \mathcal R(G) : \alpha \in \mathcal B$ only. Then whenever two NTBW's $\beta$ and $\beta'$ are distinct boundary components of the same region (``face") $F$ determined by $G$, there exists a path in $\Gamma^\ast$ from $\beta$ to $\beta'$.
\end{lem}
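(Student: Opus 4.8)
The plan is to argue by induction on the genus $g$, using the homology basis $\mathcal B$ to detect when a curve ``passes through'' the region $F$ in a way that links the boundary components $\beta$ and $\beta'$. First I would set up the key topological fact: since $\beta$ and $\beta'$ are two distinct boundary components of the single region $F$, there is a simple arc $\sigma$ inside $F$ (more precisely in $F \cup \beta \cup \beta'$) joining a point of $\beta$ to a point of $\beta'$. Cutting $\Sigma_g$ along $\sigma$ merges $\beta$ and $\beta'$ into a single boundary-type curve and, crucially, either disconnects the surface or drops the genus by one — this is the standard effect of cutting along a non-separating arc whose endpoints lie on distinct boundary circles of a subsurface. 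The dichotomy ``separates / lowers genus'' is what drives the induction.

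The main step is the non-separating case: cutting along $\sigma$ produces a connected surface $\Sigma_{g-1}$ (or, after capping, a surface of genus $g-1$). Because $\mathcal B$ generates $H_1(\Sigma_g;\Q) = \Q^{2g}$, and the cycle class ``dual'' to $\sigma$ (the class that intersects $\sigma$ nontrivially) is nonzero in $H_1$, some element $\alpha \in \mathcal B$ must have nonzero algebraic intersection with $\sigma$, hence $\alpha$ actually crosses $F$ from the $\beta$ side to the $\beta'$ side. Following $\alpha$, I would extract a connected curve-like component $c$ of $\alpha \cap \mathcal R(G)$ whose two endpoints lie on NTBW's $\beta''$ and $\beta'''$ that are ``on opposite sides'' of $\sigma$ within $F$ — so $c$ is an edge of $\Gamma^\ast$ joining something reachable from $\beta$ inside $\partial F$ to something reachable from $\beta'$ inside $\partial F$. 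This is where the generalization allowing $\mathcal B$ to contain formal sums of simple closed curves pays off: a formal sum still has a well-defined intersection number with $\sigma$, so the argument is uniform. One then reduces: in the cut surface, $F$ becomes a region (or is absorbed into $\mathcal R(G)$) with fewer boundary components, the image of $\mathcal B$ still generates the smaller homology, and by induction the relevant NTBW's are connected in the cut $\Gamma^\ast$; re-gluing along $\sigma$ only identifies curves and adds edges, so the path survives. The separating case is the easy base-type case: if $\sigma$ separates, then $\beta$ and $\beta'$ cannot both persist as boundary components of a single region unless one of the two pieces is itself accounted for — here I would observe that removing $\sigma$ splits $F$, and each side, being a connected subsurface, lets us recurse on strictly smaller data (fewer boundary components of the relevant region), with the genus not increasing.

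I expect the main obstacle to be bookkeeping the correspondence between the continuous picture (the arc $\sigma$ in $F$, cutting, capping) and the combinatorial object $\Gamma^\ast$ (whose vertices are NTBW's and whose edges are curve-like components of $\alpha \cap \mathcal R(G)$). Specifically: after cutting along $\sigma$, the two NTBW's $\beta$ and $\beta'$ get glued into one curve $\hat\beta$, but $\hat\beta$ need not itself be an NTBW of the new graph — it could become trivial, or could break into several border walks — and I must check that the inductive hypothesis still delivers a path between the vertices I care about, then translate that path back. The cleanest way to control this is to phrase the induction on the pair $(g, \text{number of boundary components of regions in }\mathcal F)$ with lexicographic order, and to verify at each step that (i) $\mathcal B$'s image still generates $H_1$ of the cut surface — this uses that cutting along a single arc kills exactly the Poincaré dual of $\sigma$ and nothing more — and (ii) a curve-like component of $\alpha \cap \mathcal R(G)$ that is split by the cut yields, in the uncut surface, a connected curve-like component through $F$, i.e. an edge of $\Gamma^\ast$ of the required form. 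Getting the intersection-number argument to produce an $\alpha$ that genuinely enters $F$ (rather than intersecting $\sigma$ in a way that can be homotoped off) is the one place subdivision-to-simplicial and the ``reasonable curve'' hypothesis are doing real work, and I would isolate that as the crux lemma.
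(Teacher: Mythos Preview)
Your induction is set up on the wrong cut. By slicing along the arc $\sigma\subset F$ from $\beta$ to $\beta'$, you merge $\beta$ and $\beta'$ into a single curve $\hat\beta$ in the cut surface. At that point there is nothing left to connect: the inductive hypothesis on the cut surface can only tell you about paths starting and ending at $\hat\beta$, and ``re-gluing only identifies curves and adds edges'' does \emph{not} manufacture a path in the original $\Gamma^\ast$ between the two distinct vertices $\beta$ and $\beta'$. The intersection-number step does not rescue this either: even granting a well-defined pairing of $\alpha\in\mathcal B$ with the arc $\sigma$, a component $c$ of $\alpha\cap\mathcal R(G)$ lies \emph{outside} $F$, so its endpoints are NTBW's that need not be $\beta$ or $\beta'$, and your phrase ``something reachable from $\beta$ inside $\partial F$'' has no content in $\Gamma^\ast$ without already knowing the lemma. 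You correctly flag the bookkeeping as the obstacle, but as written the scheme does not close.

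The paper's argument avoids this by never touching $\beta$ or $\beta'$. It inducts on the number of vertices of $\Gamma^\ast$ and, when there are at least three, picks a \emph{third} NTBW $\beta''\neq\beta,\beta'$ and caps it off (cut along $\beta''$, glue a disk on each side). Then $\beta$ and $\beta'$ survive as distinct NTBW's of the new graph on the genus-$(g-1)$ surface $\Sigma'$, still bounding a common region. The work is a separate homological lemma: choose $\lambda\in\mathcal B$ with $m=\#(\lambda,\beta'')\neq 0$, form $\mathcal S=\{\,m\alpha-\#(\alpha,\beta'')\lambda:\alpha\in\mathcal B\,\}$, reroute each element along $\beta''$ so it no longer crosses $\beta''$, and check this generates $H_1(\Sigma';\Q)$. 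By construction every curve-like component of the new basis in the enlarged tiled region is either already a component of some $\eta\cap\mathcal R(G)$ with $\eta\in\mathcal B$, or a concatenation of such components meeting at $\beta''$. Hence each edge of $(\Gamma^\ast)'$ lifts to an edge of $\Gamma^\ast$ or to a length-two path through $\beta''$, and the inductive path from $\beta$ to $\beta'$ in $(\Gamma^\ast)'$ pulls back verbatim. Your intersection-number idea is exactly what handles the base case of two NTBW's; the point is to reduce to it by removing \emph{other} NTBW's, not by collapsing $\beta$ and $\beta'$ together.
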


\subsection{Intersection Number} In order to prove Lemma~\ref{lem:connectFaceBdr}, we will use certain facts about homology bases expressed in terms of the notion of intersection number. We therefore define intersection number and list its relevant properties, sketching their proof in elementary terms. A formal treatment may be found in \cite{mo}. We recall that \S is a simplicial complex and we use the term {\it closed curve} to mean a connected $1$-cycle (we are using {\it simplicial} homology; see the brief introduction in Section~\ref{sec:hom}). Every closed curve may be written as a sum of simple closed curves, i.e. $1$-cycles with no self-intersections. Now, given any two simple closed curves $\alpha, \beta$, we define their intersection number $\#(\alpha, \beta)$. If $\alpha, \beta$ coincide, we set $\#(\alpha, \beta)$ to zero.
If they do not coincide, let $\sigma$ be a shared connected $1$-chain that is maximal in the sense that at its two distinct endpoints, $q_1$ and $q_2$, $\alpha$ and $\beta$ diverge. We assume that $q_1$ is the start of the shared chain and $q_2$ the end, using the orientation of $\alpha$ (which may not coincide with that of $\beta$). On the other hand, we use the orientation of $\beta$ to establish a left and a right side of $\beta$ (here the orientation of \S is implicit as well). Now define $\iota(\sigma)$ to be zero if $\alpha$ is incident to the same side of $\beta$ at $q_1$ and $q_2$, and define $\iota(\sigma)$ to be $+1$ (resp. $-1$) if $\alpha$ is to the left of $\beta$ at $q_1$ and then to the right at $q_2$ (resp. right and then left). Positive $\iota(\sigma)$ amounts to saying $\alpha$ leaves $\sigma$ as an outward normal to the region bounded by $\beta$. Now, suppose this has been done for all shared sequences $\sigma$ and sum $\iota(\sigma)$ over them to obtain $\#(\alpha, \beta)$. This defines intersection number for all simple closed curves. Extend this definition by bilinearity to {\it formal linear sums} of simple closed curves. The key properties of intersection number that we will use are: 
\begin{enumerate}
\item $\#(\alpha, \beta)$ is a homology invariant, i.e. depends only on the homology classes $[\beta], [\alpha]$
\item $\#(\alpha, \beta)$ is bilinear 
\item $\#(\alpha, \beta)$ is non-degenerate, i.e. any curve which is not homologous to zero must have nonzero intersection number with one of the elements of a homology basis.
\end{enumerate}
These properties show that intersection number provides a non degenerate
bilinear form on the first homology vector space, and they express Poincar\'e
duality in the special case of oriented surfaces (see \cite{ziegler} for an accessible treatment of
Poincar\'e duality and intersection forms).
We outline a proof of the listed properties alone, giving the essential ideas.
\begin{proof-sketch}[of properties above]
We now assume all curves are simple (i.e. have no self-intersections). The general case follows by straightforward extension (property 2. immediately), since arbitrary closed curves are formal linear combinations of simple ones. Moreover, for ease of discussion, we parametrize each curve compatibly with its orientation so we have an order in which to {\it follow} the curve through all its $1$-simplices.

For property 1., note that if $\beta$ and $\tilde\beta$ are homologous then $\beta$ and $-\tilde\beta$ are the two boundary components of some region $F$. As mentioned in the definition, after a shared chain $\sigma$ with positive $\iota(\sigma)$, $\alpha$ exits the region $F$, while after a shared chain with negative $\iota(p)$, $\alpha$ enters $F$. Since any entry (resp. exit) across $\beta$ that does not have a corresponding exit (resp. entry) across $\beta$ must have one across $\tilde\beta$, we conclude that $\#(\alpha, \beta) = \#(\alpha, \tilde\beta)$. The argument in the first coordinate is analogous. Property 2. follows by a similar basic argument since, for example, having $\beta = \beta'+ \beta''$ implies $-\beta$ and $\beta'$ and $\beta''$ are the boundary components of a region $F$. Finally for property 3., note that if $\#(\alpha, \beta) = 0$ for all $\alpha$ in a homology basis $\mathcal B$, then by property 2. $\beta$ would have zero intersection with {\it all} curves of the surface. If $\beta$ were not homologous to zero then removing $\beta$ from \S would not disconnect \S and so there would exist a path within the $1$-skeleton from an edge on one side of $\beta$ to an edge on the other side of $\beta$, not crossing $\beta$ itself. But now, by connecting the two endpoints of this path (to $\beta$ and possibly along it) one would obtain a curve which has intersection number $\pm 1$ with $\beta$, a contradiction. \end{proof-sketch}

\begin{defin}[Crossing] We will refer to a maximal shared connected $1$-chain of $\alpha$ and $\beta$ as a
\emph{crossing} whenever $\alpha$ and $\beta$ are on opposite sides of each other at the start and end of the chain. We will say $\alpha$ \emph{crosses} $\beta$ if such a crossing exists
(even if the intersection number of the two curves is zero).
\end{defin}

\begin{remark} \label{rem:prelem}
Any $1$-cycle $\eta$ which has zero intersection number with a closed curve $\beta$ is homologous to a sum of $1$-cycles which can be parametrized (compatible with orientation) so they do not cross $\beta$. Indeed, one pairs crossings in one direction with crossings in the other and then connects a pair of crossings by two copies of a portion of $\beta$, with suitable (opposite) orientations; when taking the formal sum, these portions of $\beta$ cancel out. 

On the other hand, for any $1$-cycle $\eta$ that has non-zero intersection number with a closed curve $\beta$ there must exist a curve-like\footref{foot:curvelike} connected component $c$ of $\eta \setminus \beta$ which starts on one side of $\beta$ and ends on the other. Indeed if this were not so then following $\eta$ (according to any fixed orientation-compatible parametrization) the values of $\iota$ at the crossings with $\beta$ would alternate between $\pm 1$, thus giving $\#(\eta, \beta) = 0$, a contradiction. 
\end{remark}

\subsection{Proof of Mathematical Lemma}
\begin{proof}[of Lemma~\ref{lem:connectFaceBdr}]
We prove the Lemma by induction on the number of vertices of $\Gamma^\ast$. For $0$ or $1$ vertices it is trivially true. Now suppose $\Gamma^\ast$ has exactly two vertices, which correspond to the NTBW's $\beta$ and $\beta'$. Then these two NTBW's must bound the same region $F$ (since an NTWB by definition is a boundary component of a region with multiple boundary components). 
By property 3. of intersection number there exists $\alpha \in \mathcal B$ such that $\#(\alpha, \beta) \neq 0$. Therefore by Remark~\ref{rem:prelem} there exists at least one component of There is moreover at least one such component $c$. So $\beta$ and $\beta'$ are connected by $c$ and hence $\Gamma^\ast$ is connected.

Now assume the statement of the Lemma holds whenever $\Gamma^\ast$ has $k$ or fewer vertices. Suppose $\Gamma^\ast$ has $k+1$ vertices and $k \geq 2$. Let $\beta$ and $\beta'$ be two distinct NTBW's which bound a common region $F$, and let $\beta''$ be a third NTBW, distinct from $\beta$ and $\beta'$. Now suppose we ``cap off" $\beta''$.
More specifically, by capping off we mean cutting along $\beta''$ and gluing in a small triangulated PLS surface homeomorphic to a closed disk on either side of the cut. The resulting surface $\Sigma'$ has genus $g-1$ and it remains connected since $G$ is (moreover, it is PLS and triangulated). The caps we glue in to $\beta''$ and $-\beta''$ are both closed disks and we will retain the name $\beta''$ for the boundary of the disk capping off $\beta''$. Moreover, since $\beta''$ was a subgraph of $G$ -- in fact the only part of $G$ that was affected by the cutting and gluing -- let the new $\beta''$ still form the same part of $G$ and continue to use the name $G$ for the resulting embedded graph on $\Sigma'$. Notice that for this graph the tiled region will be extended by exactly one disk and $\beta''$ will no longer be a NTBW but all other NTBW's in \S will still be NTBW's in $\Sigma'.$ This is because even if $\beta''$ in \S was a boundary component of $F$, there would still remain the two boundary components $\beta$ and $\beta'$ for the region after capping off and by definition of NTBW, both $\beta$ and $\beta'$ would still be NTBW's. If we denote the {\bf new tiled region} in $\Sigma'$ by $\mathcal R(G)'$, we have $\mathcal R(G) \subset \mathcal R(G)'$. 

By a standard homological argument (given in Lemma~\ref{lem:basis} below) there exists a a set $\mathcal B'$ of $1$-cycles with integer coefficients generating $H_1(\Sigma'; \Q)$ such that for each $\alpha \in \mathcal B'$, every connected curve-like component of 
$\alpha \cap \mathcal R(G)'$ is either a connected curve-like component of 
$\eta \cap \mathcal R(G)$
for some $\eta \in \mathcal B$, or else a union of portions of $\beta''$ together with  connected curve-like components of $\eta \cap \mathcal R(G)$ that meet $\beta''$. If we define $(\Gamma^\ast)'$ to be the corresponding virtual multigraph, with vertices that are NTBW's in $\Sigma'$ and edges that are connected curve-like components of $\alpha \cap \mathcal R(G)'$ for $\alpha \in \mathcal B'$, then we see that the vertex set of $(\Gamma^\ast)'$ is obtained from that of $\Gamma^\ast$ by eliminating the single vertex $\beta''$ (and no other). Moreover (using this correspondence of vertices) the property of $\mathcal B'$ given above means that
any edge of $(\Gamma^\ast)'$ (from $\nu$ to $\nu'$ say) is either an edge of $\Gamma^\ast$ or one can find two edges in $\Gamma^\ast$: one from $\nu$ to $\beta''$, the other from $\beta''$ to $\nu'$. Thus a path between vertices in $(\Gamma^\ast)'$ implies a path between those same vertices in $\Gamma^\ast$.

We are now done since we know by the inductive hypothesis that there is a path connecting $\beta$ and $\beta'$ in $(\Gamma^\ast)'.$
\end{proof}

\begin{figure}[h] 
\centerline{\epsfig{file=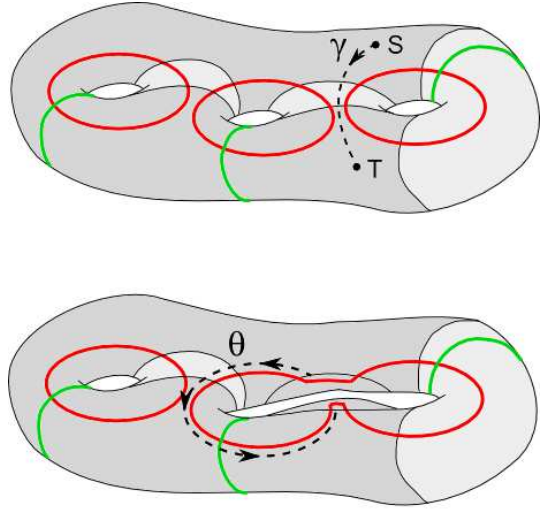,height=7cm}}
\medskip
\caption{Induction on the genus. The top surface is $\Sigma_g$, the bottom $\Sigma'$ obtained by cutting along the NTBW $first(\gamma)$ and gluing a disk in on each side. The darkly shaded area represents the tiled region in each surface. The red and green curves in the top surface are its $\lambda_i$ and $\mu_i$ respectively (same surface as in Figure~\ref{fig:handles}). The red curve in the middle plays the role of $\lambda$, by means of which we define the new reference curves for $\Sigma'$ as shown on the bottom surface (red/green retained). If we solve the routing problem based on a connecting curve $\theta$ in the surface $\Sigma'$ (which has been taken to be $\lambda\setminus first(\gamma)$ and is shown dashed) this will give us a solution to the original problem which was in terms of $\gamma$ in $\Sigma_g$. But the surface $\Sigma'$ has genus strictly less than $g$, so we are done by induction.} \label{fig:induction}
\end{figure}

The argument we just made is illustrated in Fig.~\ref{fig:induction} in terms of routing.
We now state and prove the result we used concerning homology bases and capping off.

\begin{lem} \label{lem:basis}
Let $\mathcal B$ be a set of $1$-cycles with integer coefficients which generates $H_1(\Sigma_g; \Q)$ and $\Sigma'$ the surface obtained by capping off a closed homologically non-trivial curve $\beta$ (and $-\beta$ too). Then there exists another set $\mathcal B'$ of $1$-cycles with integer coefficients which generates $H_1(\Sigma'; \Q)$ such that for each $\alpha \in \mathcal B'$, every connected curve-like\footref{foot:curvelike} component of 
$\alpha \cap \mathcal R(G)'$ is either a connected curve-like component of 
$\eta \cap \mathcal R(G)$
for some $\eta \in \mathcal B$, or else a union of portions of $\beta$ together with connected curve-like components of $\eta \cap \mathcal R(G)$ that meet $\beta$. 
\end{lem}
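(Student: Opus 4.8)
The plan is to produce $\mathcal{B}'$ by taking the cycles of $\mathcal{B}$ and surgically removing their homological dependence on $[\beta]$, so that the surviving part of each cycle can be pushed off $\beta$ and hence lies cleanly inside $\mathcal{R}(G)' \setminus \beta = \mathcal{R}(G)$ (plus possibly portions of $\beta$ itself). First I would recall the standard algebraic relationship between $H_1(\Sigma_g;\Q)$ and $H_1(\Sigma';\Q)$: capping off $\beta$ (and $-\beta$) kills the class $[\beta]$ and also kills a dual class $[\delta]$ with $\#([\beta],[\delta]) = \pm 1$, so that $H_1(\Sigma';\Q)$ is the quotient of the subspace $\{[\eta] : \#([\eta],[\beta]) = 0\}$ of $H_1(\Sigma_g;\Q)$ by the line spanned by $[\beta]$. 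Concretely I would pick an element $\eta_0 \in \mathcal{B}$ with $\#(\eta_0,\beta) = c \neq 0$ (which exists by non-degeneracy, property 3 of intersection number), and for every other $\alpha \in \mathcal{B}$ replace it by $\alpha' := c\,\alpha - \#(\alpha,\beta)\,\eta_0$, which has zero intersection number with $\beta$; the collection of all these $\alpha'$ together with $\beta$ itself spans the subspace $\{\#(\cdot,\beta)=0\}$, and modding out $\beta$ gives a generating set for $H_1(\Sigma';\Q)$. (Integer coefficients are preserved since $c$ and $\#(\alpha,\beta)$ are integers.)

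The second step is to realize each $\alpha'$ geometrically so that it actually avoids $\beta$ as a curve, not merely homologically. This is exactly the content of the first paragraph of Remark~\ref{rem:prelem}: since $\#(\alpha',\beta) = 0$, one pairs up the crossings of $\alpha'$ with $\beta$ in opposite directions and reroutes each paired crossing along two oppositely-oriented copies of an arc of $\beta$, which cancel in the formal sum. The resulting $1$-cycle — call it the final representative we put into $\mathcal{B}'$ — is homologous to $\alpha'$ and its connected curve-like components are of exactly the two permitted types: either a component of $\eta \cap \mathcal{R}(G)$ for some $\eta \in \mathcal{B}$ (the pieces of $\alpha$ and $\eta_0$ away from $\beta$ that were never rerouted), or a union of portions of $\beta$ spliced together with components of $\eta \cap \mathcal{R}(G)$ that met $\beta$ (the rerouted pieces). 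Since $\mathcal{R}(G)' = \mathcal{R}(G) \cup (\text{two capping disks})$ and the capping disks are bounded by the now-trivial $\beta$, every such component genuinely lies in $\mathcal{R}(G)'$, and the one piece of $\mathcal{B}'$ equal to $\beta$ is now null-homologous so it contributes nothing and may simply be dropped.

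The main obstacle I expect is bookkeeping the curve-like components correctly through the rerouting, rather than any deep point. One has to be careful that when $\alpha$ or $\eta_0$ crosses $\beta$ several times, the surgery that cancels a pair of crossings may merge several components of $\alpha \cap \mathcal{R}(G)$ and $\eta_0 \cap \mathcal{R}(G)$ into a single component of the new cycle; one must check this merged object is still of the allowed form "union of portions of $\beta$ together with components of $\eta \cap \mathcal{R}(G)$ meeting $\beta$", which it is, essentially by construction, because every arc introduced lies in $\beta$ and every arc retained is a sub-arc of some original $\eta \cap \mathcal{R}(G)$. A secondary subtlety is ensuring the rerouted curves can be taken simplicial after the subdivision already in force (or a further finite subdivision), and that they remain $1$-cycles with integer coefficients — both follow from the fact that the arcs of $\beta$ used are simplicial and are traversed with integer (indeed $\pm 1$) multiplicities. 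Once these points are checked, defining $\mathcal{B}'$ as the set of final representatives completes the proof.
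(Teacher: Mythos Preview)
Your proposal is correct and follows essentially the same approach as the paper: both pick $\eta_0 \in \mathcal B$ (the paper calls it $\lambda$) with $m = \#(\eta_0,\beta) \neq 0$, form the cycles $m\alpha - \#(\alpha,\beta)\eta_0$ for $\alpha \in \mathcal B$, and then apply Remark~\ref{rem:prelem} to reroute each such cycle off $\beta$ using arcs of $\beta$, which is exactly what yields the two permitted types of curve-like components. The only difference is in arguing that the result generates $H_1(\Sigma';\Q)$: you invoke the standard description of $H_1(\Sigma';\Q)$ as the quotient of $\ker\#(\cdot,\beta)\subset H_1(\Sigma_g;\Q)$ by $\langle[\beta]\rangle$, whereas the paper proves generation directly by taking an arbitrary class $[\delta]$ in $\Sigma'$, pushing it off the capping disk, writing it in $\mathcal B$, and using $\#(\delta,\beta)=0$ to rewrite it as a $\Q$-combination of the new cycles.
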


\begin{proof}
We construct a set of $1$-cycles which generates the homology of $\Sigma'$ and has the required properties.

Since $\beta$ is not homologous to zero, there is some $\alpha \in \mathcal B$ such that $\#(\alpha, \beta) \neq 0$ by property 3. above. Denote it $\lambda$ and let $m = \#(\lambda, \beta) \neq 0$. Now consider the following set of $1$-cycles with integer coefficients:
$$\mathcal S = \{m \alpha - \#(\alpha, \beta)\lambda ~:~ \alpha \in \mathcal B\}.$$
All of these curves have zero intersection number with $\beta$. Using Remark~\ref{rem:prelem}, each one is homologous to a sum of (parametrized) closed curves not crossing $\beta$. Let $\mathcal S'$ be the set of all these closed curves; they are also closed curves in $\Sigma'$. We claim the elements of $\mathcal S'$ generate the first homology of $\Sigma'$. 

Warning: when we write $[\cdot]$ we mean a homology class in $H_1(\Sigma_g, \Q)$; the homology classes in $H_1(\Sigma_g, \Q)$ we will simply describe as such and not give them a short-hand notation.

Let $\delta$ be an arbitrary closed curve in $\Sigma'$. Since $\beta$ bounds an open region $D$ in $\Sigma'$ that is homeomorphic to an open disk, any piece of $\delta$ which is a chord of $D$ is homologous (in $\Sigma'$) to a connected sub-chain of $\beta$. Using this to substitute all such chords of $\delta$, we obtain that $\delta$ is homologous (in $\Sigma'$) to a curve which does not cross $\beta$ and in fact lies outside $D$. Without loss of generality we may take this new curve as $\delta$ (since we are interested in generating the homology of $\Sigma'$). It corresponds to a curve in \S (by replacing $1$-simplices on the boundary of the cap with $1$-simplices of $\beta$, as usual). Denote this curve in \S also by $\delta$. It is homologous in \S to a linear combination of the elements of $\mathcal B$, 
say 
$$[\delta] = \sum\limits_{\alpha \in \mathcal B} a(\alpha) [\alpha]$$
with coefficients $a(\alpha) \in \Q$ indexed by $\alpha \in \mathcal B$. We have
$$0 = \#(\delta, \beta) = \sum\limits_{\alpha \in \mathcal B} a(\alpha) \#(\alpha, \beta).$$
Therefore,
\begin{eqnarray*}
\left [ \sum\limits_{\alpha \in \mathcal B} \frac{a(\alpha)}{m} 
  \left \{ m\alpha- \#(\alpha, \beta)\lambda \right \} \right ]
&=& \sum\limits_{\alpha \in \mathcal B} \frac{a(\alpha)}{m} [m\alpha] 
  - \frac{1}{m}\sum\limits_{\alpha \in \mathcal B} a(\alpha)\#(\alpha, \beta) [\lambda] \\
&=& \sum\limits_{\alpha \in \mathcal B} a(\alpha) [\alpha] = [\delta].
\end{eqnarray*}
None of the elements of $\mathcal S'$ crosses $\beta$ and so neither does a linear combination of them. We thus have two curves, $\sum\limits_{\alpha \in \mathcal B} \frac{a(\alpha)}{m} 
  \left \{ m\alpha- \#(\alpha, \beta)\lambda \right \}$ and $\delta$ which lie outside $D$ and are homologous to each other in $\Sigma_g.$ Denote the first curve $\eta$ and let $\kappa$ be a $2$-chain in \S with boundary $\eta-\delta$.
When we cap off $\beta$ as described earlier we produce the curve $\eta-\delta$ in $\Sigma$ and a new $2$-chain $\kappa'$ in $\Sigma_g'$ whose boundary is $\eta-\delta$.  Thus $\eta$ and $\delta$ are also homologous in $\Sigma'.$
So $\mathcal S'$ indeed generates the first homology of $\Sigma'$. 
\end{proof}

\section{Conclusions} Face Routing (FR) is at the base of most position-based routing algorithms used in ad hoc networks today. It is a logarithmic-memory, local algorithm guaranteeing delivery in embedded graphs on the plane but not on positive genus surfaces. We have exhibited a position-based algorithm, GFR, which guarantees delivery for embedded graphs on surfaces of arbitrary genus. It is also local and uses logarithmic memory. Like FR, it is a base algorithm which as a next step could be adapted to handle edge-crossings. We welcome communication on particular settings of interest, as the method of handling edge-crossings would have to be somewhat tailored to the application. While universal exploration sequences (UXS's) provide a {\it non position-based} 
logarithmic-memory, local routing algorithm for any network, the time is a high polynomial ($> 16$) and no better algorithm has until now been proposed which would apply to 3D ad hoc networks. FR and GFR take linear and quadratic time respectively. By restricting to the setting of graphs on surfaces which retains the codimension-one character of graphs in the plane, we have obtained (with GFR) an FR-like gain in time efficiency for the task of local, logarithmic-memory routing in a class of 3D ad hoc network.


\begin{thebibliography}{99}



\bibitem{kranurr2} P. Boone, E. Chavez, L. Gleitzky, E. Kranakis, 
J. Opatrny, G. Salazar and J. Urrutia, 
Morelia Test: Improving the Efficiency of the Gabriel Test and Face Routing in Ad-hoc Networks, 
{in Proc. SIROCCO 2004}, R. Kralovic and O. Sykora eds., Springer Verlag, 2004,
LNCS {3104}, pp. 23-34.

\bibitem{braverm} M. Braverman, {On ad hoc routing with guaranteed delivery}, arxiv (2008),
{\tt http://arxiv.org/abs/0804.0862}.

\bibitem{lata}
S. Durocher, D. Kirkpatrick and L. Narayanan, 
{On Routing with Guaranteed Delivery in Three-Dimensional Ad Hoc Wireless Networks},
{in Proc. ICDCN 2008}, Springer Verlag, 2008
LNCS {4904}, pp. 546-557.

\bibitem{erickson}
J. Erickson and K. Whittlesey,
{Greedy optimal homotopy and homology generators},
{in Proc. SODA 2005}, SIAM, 2005, pp. 1038-1046. 

\bibitem{fra1} M. Fraser, {Local Routing on Tori}, {in Proc. ADHOC-NOW 2007}, 
Springer Verlag, 2007,
LNCS 4686, pp. 153-166.

\bibitem{fraJoint} M. Fraser, E. Kranakis J. Urrutia, {Memory Requirements for 
Local Geometric Routing and Traversal in Digraphs}, in
Proc. CCCG, 2008, pp. 195-198. 

\bibitem{hatcher} A. Hatcher, {Algebraic Topology}, Cambridge University Press, 2001.

\bibitem{mo} M. Hirsch, {Differential Topology}, 6th ed., Springer Verlag, 1997.

\bibitem{klein} R. Klein, {Linear Approximation of Trimmed Surfaces}, {in The Mathematics Of Surfaces VI}, Oxford Universitry Press, 1994, pp. 209-212. 

\bibitem{kranurr} E. Kranakis, H. Singh and J. Urrutia, {Compass Routing on Geometric Networks}, 
in Proc. CCCG, 1999, pp. 51-54.

\bibitem{kuhn} F. Kuhn, R. Wattenhofer and A. Zollinger, {Ad-Hoc Networks 
Beyond Unit Disk Graphs}, {in Proc. DIALM-POMC 2003}, ACM, 2003, pp. 69-78.

\bibitem{margaliot} M. Margaliot, C. Gotsman, {Piecewise-linear surface approximation from noisy scattered samples}, in 
Proc. IEEE conference on Visualization, IEEE, 1994, pp. 61-68.

\bibitem{massey} W.S. Massey, {Algebraic Topology: An Introduction}, Springer Verlag, 1967.

\bibitem{stconn} O. Reingold, {Undirected ST-Connectivity in Log-Space}, {in
Proc. STOC 2005}, ACM, 2005, pp. 376-385.

\bibitem{vegter} G. Vegter, {Computational Topology}, {in CRC Handbook of Discrete and Computational Geometry}, J.E. Goodman and J. O'Rourke eds., CRC Press, 2004, Chapter 32, pp. 719-742.

\bibitem{yamada} A. Yamada, K. Shimada and T. Itoh, {Energy-Minimizing Approach to Meshing Curved Wire-Frame Models}, from 5th International Meshing Roundtable, Sandia National Laboratories, 1996, pp.179-194.

\bibitem{ziegler} G. M. Ziegler, online course notes, Topology BMS Basic Course � Winter 07/08, TU Berlin, 2008, {\tt  http: //www.math.tu-berlin.de/\~{}ziegler/TOP3/notes10.pdf.}

\end{thebibliography}
\end{document}